\documentclass[
  aps,
  prx,
  reprint,
  superscriptaddress,
  longbibliography,
  floatfix
]{revtex4-2}

\usepackage{amsmath,amssymb,amsfonts,amsthm}
\usepackage{graphicx}
\usepackage{textcomp}
\usepackage{xcolor}
\usepackage{siunitx}
\usepackage[dvipsnames]{xcolor}
\usepackage{hyperref}
\usepackage[percent]{overpic}
\definecolor{QNT_green}{HTML}{539E8E}

\hypersetup{
  colorlinks   = true, 
  urlcolor     = QNT_green, 
  linkcolor    = QNT_green, 
  citecolor   = QNT_green 
}

\newtheorem{theorem}{Theorem}[section]
\newtheorem{corollary}[theorem]{Corollary}
\newtheorem{lemma}[theorem]{Lemma}

\newtheorem{definition}[theorem]{Definition}

\begin{document}
\title{On Certifying Source Sampling Hardness in Quantum Generative Modeling}

\author{Chen-Yu Liu}
\email{chen-yu.liu@quantinuum.com}
\affiliation{Quantinuum, Partnership House, London, UK}

\author{Leonardo Placidi}
\email{leonardo.placidi@quantinuum.com}
\affiliation{Quantinuum, Otemachi Financial City Grand Cube, Tokyo, Japan}

\author{Enrico Rinaldi}
\email{enrico.rinaldi@quantinuum.com}
\affiliation{Quantinuum, Partnership House, London, UK}

\begin{abstract}
Quantum generative models are often motivated by circuit families whose output distributions are believed to be classically hard to sample from. When such models are trained on ordinary classical datasets, however, this hardness does not automatically transfer to the unknown data-generating distribution. We show that transferring sampling hardness via total-variation closeness from a quantum model to an unknown source requires certifying a global relation between the two distributions, thereby reducing the problem to distribution certification. Combining this reduction with existing certification lower bounds yields an exponential sample requirement for the high-entropy distributions relevant to many sampling-hardness proposals. Consequently, polynomially many samples cannot, in general, justify attributing sampling hardness to an unknown data-generating distribution. Moreover, even classically trivial distributions, such as the uniform distribution and product distributions, require exponentially many samples to certify in the absence of structural assumptions. Our results clarify the role of sampling hardness in quantum generative modeling and distinguish generator-level hardness from source-level hardness when learning from ordinary datasets.
\end{abstract}

\maketitle

\section{Introduction}

Quantum generative models (QGMs) are frequently motivated by the possibility of implementing probability distributions that are difficult to sample from using classical computers \cite{zhang2024generative, lloyd2018quantum, huang2025generative, recio2025train, coyle2020born, shen2026characterizing, aaronson2016complexity}. Prominent examples arise from quantum circuit families related to random circuit sampling \cite{arute2019quantum, morvan2024phase, google2025observation, bouland2019complexity, ZHU2022240, ransford202698, DeCross_2025}, instantaneous quantum polynomial-time (IQP) circuits \cite{ballo2026shallow, placidi2026impact, liu2026toward, bremner2011classical, leontica2024exploring}, and other sampling problems for which efficient classical simulation is believed to be impossible under standard complexity-theoretic assumptions \cite{bremner2011classical}. This observation suggests an appealing route toward quantum advantage in generative modeling: if a quantum device can efficiently generate samples from a distribution that no efficient classical algorithm can reproduce, then a generative model built from such a device \emph{may} possess capabilities unavailable to classical models. Classical sampling hardness has therefore become an important motivation for the design and analysis of QGMs.

At the same time, several proposals have demonstrated that QGMs based on such circuit families can be trained \cite{recio2025train, liu2026toward, kolarovszki2026generative, bako2025fermionic, liu2026generativeiqpcircuitlearning}, at least in restricted settings and to a prescribed accuracy \cite{herbst2025limits}, using objectives such as the maximum mean discrepancy (MMD) \cite{recio2025train, tuysuz2026quantum}. This naturally suggests the following line of reasoning. One begins with a quantum model family containing classically hard-to-sample distributions, trains a member of that family on a finite dataset, and then argues that the resulting model may inherit a quantum advantage from the sampling hardness of the underlying quantum generator. In this narrative, successful learning appears to connect complexity-theoretic hardness with the practical task of modeling data.

Recent work has begun examining whether this connection is justified. Herbst \emph{et al.}~\cite{herbst2025limits} study the interplay between anticoncentration, trainability, and classical surrogate sampling in QGMs. They show that output distributions exhibiting the flatness associated with standard sampling-hardness proposals are generally difficult to train, whereas sufficiently sparse distributions may become trainable but admit efficient classical surrogate samplers. Their work therefore asks whether the distributional properties underlying classical sampling hardness are compatible with practical trainability.

A complementary line of work concerns the statistical certification of quantum sampling devices. Hangleiter \emph{et al.}~\cite{hangleiter2019sample} consider the problem of certifying a \emph{known} target distribution from classical samples alone. Using instance-optimal identity-testing bounds, they show that non-interactive certification requires exponentially many samples for sufficiently flat target distributions, including those arising in prominent quantum-sampling proposals. Their result is information-theoretic: the target distribution is assumed to be completely specified, the certifier may have unlimited computational power, and the limitation arises purely from sample complexity.

The present work addresses a different question that lies between these two directions. In ordinary generative modeling, the data-generating distribution is typically \emph{unknown}; only a finite dataset is available. This raises a logically distinct problem: under what conditions can the classical sampling hardness of a specified quantum generator be transferred to the unknown source that generated the training data? We show that any such transfer based on total-variation closeness necessarily requires solving a known-target distribution-certification problem. Combining this reduction with the certification lower bounds of Hangleiter \emph{et al.} yields an exponential sample requirement for the high-entropy and almost-uniform distributions relevant to many sampling-hardness proposals. Consequently, polynomially many samples cannot, in general, justify attributing sampling hardness to an unknown data-generating distribution.

This observation changes the interpretation of sampling hardness in quantum generative modeling. Classical sampling hardness remains a meaningful complexity-theoretic property of specified quantum generators and promised source families. For ordinary datasets of unknown origin, however, the relevant question is not whether the latent source is classically hard to sample from, but whether a quantum model provides measurable advantages over strong classical baselines under the assumptions of the learning task.

Our contribution is to identify known-target certification as the missing statistical step in closeness-based transfer from generator-level sampling hardness to source-level hardness, and to derive its consequences for quantum generative modeling.

A related total-variation promise problem also appears in complexity theory. The \emph{Statistical Difference} problem is complete for the class Statistical Zero Knowledge (SZK)~\cite{sahai2003complete}. Although both settings formulate closeness using total-variation distance, they study different computational resources. Statistical Difference concerns the computational complexity of deciding closeness when descriptions of both samplers are given, whereas the present work concerns the information-theoretic sample complexity of certifying an unknown source from i.i.d. observations. Consequently, our results neither follow from nor establish SZK-hardness.

The remainder of this paper is organized as follows. In Sec.~\ref{sec:hardness_of_DG}, we distinguish generator-level sampling hardness from source-distribution sampling hardness, with particular emphasis on the difference between known and unknown data-generating mechanisms. In Sec.~\ref{sec:hardness-transfer}, we show that transferring sampling hardness from a specified quantum distribution to an unknown source requires certifying a sufficiently strong global relation between the two distributions. In Sec.~\ref{sec:certification-bounds}, we connect this requirement to known lower bounds for distribution certification and derive sample-complexity consequences for representative source families, including uniform, product, sparse, and Porter--Thomas-like distributions. In Sec.~\ref{sec:implications}, we discuss the implications for the interpretation of sampling-based quantum advantage in quantum generative modeling. Finally, Sec.~\ref{sec:conclusion} summarizes our conclusions.

\section{Hardness of Distributions and Generators}
\label{sec:hardness_of_DG}

The notion of classical sampling hardness can refer to several different objects, and these distinctions are essential in the context of quantum generative modeling. In particular, one should separate the hardness of a specified generator, the hardness of the unknown distribution underlying a dataset, the empirical fit of a trained model to finite samples, and the existence of a genuine quantum learning advantage. These statements are related, but none of them follows automatically from the others.

\subsection{Generator-level sampling hardness}
\label{sec:glsh}

Let \( \mathcal{G}_n=\{G_{\theta_n}:\theta_n\in\Theta_n\} \) be an ensemble of quantum generators acting on \(n\) qubits, and let \(Q_{\theta_n}\) denote the distribution obtained by measuring \(G_{\theta_n}\) in the computational basis.

We say that the ensemble exhibits \emph{generator-level classical sampling hardness} at total-variation tolerance \(\varepsilon(n)\) if there is no uniform classical randomized polynomial-time algorithm that, given a description of \(G_{\theta_n}\), produces samples from a distribution \(\widetilde Q_{\theta_n}\) satisfying
\begin{equation}
d_{\mathrm{TV}}
\left(
\widetilde Q_{\theta_n},
Q_{\theta_n}
\right)
\leq
\varepsilon(n)
\end{equation}
for the required fraction of instances \(\theta_n\) and for all sufficiently large \(n\). The required fraction of instances and the admissible scaling of \(\varepsilon(n)\) depend on the circuit family and on the precise average-case hardness conjecture used.

For the approximate-sampling hardness results considered here, this  conclusion is typically conditional on three ingredients. First, approximating the relevant output probabilities (or related counting quantities) of the circuit family to multiplicative precision is assumed to be \(\#\mathrm{P}\)-hard on average over a non-negligible fraction of instances \cite{bremner2011classical, bouland2019complexity, aaronson2011computational}. Second, the output distribution is assumed to anticoncentrate, so that these probabilities are not much smaller than their natural scale \(2^{-n}\) on a non-negligible fraction of instances.
Namely, there exist constants
\(a,\beta>0\), independent of \(n\), such that
\begin{equation}
\Pr_{\theta_n\sim\mu_n}
\left[
Q_{\theta_n}(x)
\geq
\frac{a}{2^n}
\right]
\geq
\beta
\end{equation}
for every fixed output string \(x\), where the probability is taken over the random choice of generator instance \(\theta_n\sim\mu_n\). And a second-moment formulation requires
\begin{equation}
\mathbb E_{\theta_n\sim\mu_n}
\left[
Q_{\theta_n}(x)^2
\right]
\leq
\frac{c_0}{2^{2n}}
\end{equation}
for some constant \(c_0>0\). Under suitable first-moment normalization, the second-moment condition implies a threshold anticoncentration bound through the Paley--Zygmund inequality \cite{petrov2007lower, Paley_Zygmund_1932}. Third, the polynomial hierarchy is assumed not to collapse. Under these assumptions, the existence of an efficient classical approximate sampler, combined with Stockmeyer's approximate-counting algorithm \cite{stockmeyer1983complexity}, would imply such a collapse.

Generator-level sampling hardness is therefore a statement about a specified circuit ensemble, an instance distribution \(\mu_n\), an approximation regime, an anticoncentration property, and an average-case complexity conjecture. It does not imply that every parameter choice in a variational model family is classically hard to sample from. Importantly, this definition does not concern whether the total-variation condition can be established from finite classical samples alone. In fact, later sections will show that these are fundamentally different questions.

\subsection{Source-distribution sampling hardness}
\label{sec:sdsh}

Source-distribution sampling hardness concerns the computational complexity of the process that generates the training data in a quantum generative modeling task. As with generator-level sampling hardness, it is defined for an asymptotic family of source distributions
\begin{equation}
\mathcal{P}_{\mathrm{data}}
=
\{P_{\mathrm{data},n}\}_{n\geq1},
\end{equation}
where each \(P_{\mathrm{data},n}\) is a distribution over \(\{0,1\}^n\).

\paragraph{Known-source distributions.}

Suppose that the data-generating mechanism is known. For example, the training data may be produced by a specified family of quantum circuits whose output distributions are believed to be classically hard to sample. In this setting, source-distribution sampling hardness is defined exactly as in Sec.~\ref{sec:glsh}. Namely, one asks whether there exists a uniform classical randomized polynomial-time algorithm that can approximately sample from the source distribution within the required total-variation accuracy.

While this is the same notion as generator-level sampling hardness, the only difference is the object under consideration: generator-level hardness concerns the output distribution of a specified quantum model, whereas source-distribution hardness concerns the distribution that generated the training data when the source itself is known, for example through an explicit circuit description, a physical preparation procedure, or a promised source family. An example can be found in \cite{huang2025generative}.

\paragraph{Unknown-source distributions.}

In ordinary generative modeling, however, the data-generating process is typically not known or not given. Instead, the learner is simply given a finite dataset
\begin{equation}
D=\{x_1,\ldots,x_m\},
\qquad
x_i
\overset{\mathrm{i.i.d.}}{\sim}
P_{\mathrm{data}},
\end{equation}
where the underlying source distribution is latent.

This seemingly small change fundamentally alters the problem. Unlike the known-source setting above, the source distribution is no longer specified by a circuit description, probability oracle, or trusted physical preparation procedure. The only available evidence consists of finitely many observed samples.

Consequently, in this case, although generator-level hardness and source-distribution hardness are formally defined by the same approximation criterion, they have fundamentally different operational meaning. In the former, the distribution is explicitly specified by a known computational object. In the latter, the distribution is latent, and any complexity-theoretic claim about it must be justified using only finite observations or additional assumptions about the source. 

Throughout the remainder of this paper we write $P_{\mathrm{data}}$ whenever the dependence on $n$ is clear from context. The central question of this work concerns precisely this unknown-source setting: whether the classical sampling hardness of an unknown data-generating distribution can be meaningfully inferred or certified from the finite datasets encountered in ordinary quantum generative modeling.

\subsection{Finite-sample fit}

Let \(Q_{\widehat\theta}\) be the output distribution of a trained QGM. A statement of finite-sample fit means that \(Q_{\widehat\theta}\) agrees with the dataset according to a chosen empirical objective, such as empirical likelihood, an MMD loss, a kernel statistic, or performance on a held-out sample. Symbolically, one may write
\begin{equation}
\widehat{\mathcal L}_D(Q_{\widehat\theta})
\leq \delta_{\mathcal L},
\end{equation}
for some empirical loss \(\widehat{\mathcal L}_D\) and tolerance \(\delta_{\mathcal L}\).
Such a statement is statistical rather than complexity-theoretic. It shows that the trained model is compatible with the observed data under the selected criterion. It does not establish that
\begin{equation}
Q_{\widehat\theta}=P_{\mathrm{data}},
\end{equation}
nor that \(Q_{\widehat\theta}\) is classically hard to sample from, nor that \(P_{\mathrm{data}}\) itself is classically hard. 
In particular, agreement on finitely many samples may leave the model's behavior on the unobserved region determined primarily by, where applicable, the model architecture, inductive bias, and training procedure.

\section{Hardness transfer requires certification}
\label{sec:hardness-transfer}

Suppose that a quantum generator produces a distribution \(Q_n\) that is classically hard to sample within some total-variation tolerance. An interesting question is ``Under what conditions can this hardness be attributed to another distribution \(P_n\), such as the unknown distribution underlying a training dataset?''

The relevant condition is global distributional closeness. If \(P_n\) is sufficiently close to \(Q_n\) in total variation distance, then any efficient classical sampler for \(P_n\) would also provide an efficient approximate sampler for \(Q_n\). This observation gives the following elementary hardness-transfer result.

\begin{lemma}[Hardness transfer under total-variation closeness]
\label{lem:hardness-transfer}
Let
\begin{equation}
\mathcal{Q}
=
\{Q_n\}_{n\geq 1}
\end{equation}
be a distribution family that is classically hard to sample within total-variation error \(\varepsilon(n)>0\). That is, under the complexity-theoretic assumptions associated with the hardness result, there is no uniform classical randomized polynomial-time algorithm whose output distribution \(\widetilde Q_n\) satisfies
\begin{equation}
d_{\mathrm{TV}}
\left(
\widetilde Q_n,Q_n
\right)
\leq
\varepsilon(n)
\end{equation}
for all sufficiently large \(n\).

Let
\begin{equation}
\mathcal{P}
=
\{P_n\}_{n\geq 1}
\end{equation}
be another distribution family satisfying
\begin{equation}
d_{\mathrm{TV}}
\left(
P_n,Q_n
\right)
\leq
\delta(n),
\qquad
0\leq \delta(n)<\varepsilon(n).
\end{equation}
Then \(\mathcal{P}\) is classically hard to sample within total-variation error
\begin{equation}
\varepsilon_{\mathcal P}(n)
=
\varepsilon(n)-\delta(n).
\end{equation}
\end{lemma}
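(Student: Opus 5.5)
The proof is by contraposition, using the triangle inequality for total-variation distance. Suppose, for contradiction, that \(\mathcal P\) is \emph{not} classically hard to sample within error \(\varepsilon_{\mathcal P}(n)=\varepsilon(n)-\delta(n)\). Then there is a uniform classical randomized polynomial-time algorithm \(\mathcal A\) whose output distribution \(\widetilde P_n\) satisfies \(d_{\mathrm{TV}}(\widetilde P_n,P_n)\le \varepsilon(n)-\delta(n)\) for all sufficiently large \(n\). We will use the same algorithm \(\mathcal A\), unchanged, as a candidate sampler for \(Q_n\), so we set \(\widetilde Q_n:=\widetilde P_n\).

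The key step is the metric bound
\begin{equation}
d_{\mathrm{TV}}(\widetilde Q_n,Q_n)\le d_{\mathrm{TV}}(\widetilde P_n,P_n)+d_{\mathrm{TV}}(P_n,Q_n)\le \bigl(\varepsilon(n)-\delta(n)\bigr)+\delta(n)=\varepsilon(n).
\end{equation}
The triangle inequality holds because \(d_{\mathrm{TV}}\) is half the \(\ell_1\) distance on distributions over \(\{0,1\}^n\). The bound holds for all sufficiently large \(n\), namely beyond the larger of the thresholds for \(\mathcal A\)'s guarantee and for the closeness hypothesis. Since \(\mathcal A\) is uniform and polynomial-time, it is therefore an efficient classical approximate sampler for \(\mathcal Q\) within error \(\varepsilon(n)\). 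This contradicts the assumed hardness of \(\mathcal Q\) under the stated complexity-theoretic assumptions, so \(\mathcal P\) is hard within \(\varepsilon_{\mathcal P}(n)\). The condition \(\delta(n)<\varepsilon(n)\) is used only to make \(\varepsilon_{\mathcal P}(n)>0\), so that the conclusion is a nontrivial hardness statement.

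The main obstacle is bookkeeping rather than mathematics: the reduction must respect the exact quantifiers in the hardness definition. I will handle this as follows.
\begin{itemize}
\item The sampler for \(\mathcal P\) needs no description of \(P_n\) or \(Q_n\) as input. This is consistent with the lemma's formulation in terms of fixed families \(\{Q_n\}\) and \(\{P_n\}\) indexed by \(n\), and the reduction adds no computation.
\item If the hardness of \(\mathcal Q\) is only average-case over instances \(\theta_n\sim\mu_n\), the same argument applies instance by instance. Every instance on which the \(\mathcal P\)-sampler succeeds within \(\varepsilon-\delta\) and the closeness bound holds yields a \(Q\)-sampler within \(\varepsilon\). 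The required fraction of instances therefore transfers, provided closeness is assumed on those same instances.
\end{itemize}
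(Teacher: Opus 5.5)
Your proposal is correct and follows the paper's proof essentially verbatim: assume an efficient sampler for \(P_n\) within \(\varepsilon(n)-\delta(n)\), reuse it unchanged as a sampler for \(Q_n\), and apply the triangle inequality to get error at most \(\varepsilon(n)\), which contradicts the hardness of \(\mathcal Q\). Your extra remarks on the ``sufficiently large \(n\)'' thresholds and on applying the argument instance by instance in the average-case setting are sound bookkeeping that the paper leaves implicit.
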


\begin{proof}
Assume, for contradiction, that there exists a uniform classical randomized polynomial-time algorithm whose output distribution \(\widetilde P_n\) satisfies
\begin{equation}
d_{\mathrm{TV}}
\left(
\widetilde P_n,P_n
\right)
\leq
\varepsilon(n)-\delta(n)
\end{equation}
for all sufficiently large \(n\). By the triangle inequality,
\begin{align}
d_{\mathrm{TV}}
\left(
\widetilde P_n,Q_n
\right)
&\leq
d_{\mathrm{TV}}
\left(
\widetilde P_n,P_n
\right)
+
d_{\mathrm{TV}}
\left(
P_n,Q_n
\right)
\nonumber\\
&\leq
\varepsilon(n)-\delta(n)+\delta(n)
\nonumber\\
&=
\varepsilon(n).
\end{align}
Thus the same algorithm would approximately sample from \(Q_n\) within the tolerance ruled out by the assumed hardness of \(\mathcal Q\), a contradiction.
\end{proof}

Lemma~\ref{lem:hardness-transfer} identifies the condition required to transfer sampling hardness from a specified quantum distribution to a data-generating distribution. It is not sufficient that \(Q_n\) belong to a hard quantum model family, nor that a trained model achieve a small empirical loss on samples from \(P_n\). One must establish a global relation of the form
\begin{equation}
d_{\mathrm{TV}}
\left(
P_n,Q_n
\right)
\leq
\delta(n)
<
\varepsilon(n),
\end{equation}
where \(\varepsilon(n)\) is the approximation tolerance appearing in the generator-level hardness statement. This observation separates two logically distinct questions. The first is complexity-theoretic:

\begin{quote}
Given a specified distribution family \(Q_n\), is there an efficient classical algorithm that samples from it within error \(\varepsilon(n)\)?
\end{quote}

The second is statistical:

\begin{quote}
Given only samples from an unknown source \(P_n\), can one certify that \(P_n\) lies within distance \(\delta(n)\) of \(Q_n\)?
\end{quote}

The first question concerns the hardness of a known computational object. The second concerns the certification of a relation between a known target distribution and a latent source distribution. The generator-level hardness result does not answer the second question.

This distinction is especially important in quantum generative modeling. Suppose that a trained quantum model produces a distribution \(Q_{\widehat\theta,n}\) that is believed to be classically hard to sample. To infer that the unknown data-generating distribution \(P_{\mathrm{data},n}\) shares this hardness, one would need to establish
\begin{equation}
d_{\mathrm{TV}}
\left(
P_{\mathrm{data},n},
Q_{\widehat\theta,n}
\right)
\leq
\delta(n)
\end{equation}
for a tolerance smaller than the hardness threshold of the trained quantum distribution. Finite-sample agreement under a training objective does not, by itself, provide such a certificate.

We therefore define the certification problem that mediates any closeness-based transfer of sampling hardness.

\begin{definition}[Known-target certification complexity]
\label{def:cert-complexity}
Let \(Q\) be a known distribution over a finite domain \(\mathcal X\), and let \(0\leq \varepsilon_0<\varepsilon_1\leq 1\). The certification sample complexity
\begin{equation}
m_{\mathrm{cert}}
\left(
Q;\varepsilon_0,\varepsilon_1
\right)
\end{equation}
is the minimum number of independent samples from an unknown distribution \(P\) required by any test that distinguishes
\begin{equation}
d_{\mathrm{TV}}(P,Q)\leq\varepsilon_0
\end{equation}
from
\begin{equation}
d_{\mathrm{TV}}(P,Q)\geq\varepsilon_1.
\end{equation}
\end{definition}

The identity-testing problem corresponds to the special case \(\varepsilon_0=0\). More generally, the separated thresholds \(\varepsilon_0<\varepsilon_1\) describe tolerant certification.
Any $(\varepsilon_0,\varepsilon_1)$-tolerant certification procedure also solves known-target identity testing at separation  $\varepsilon_1$, since the identity case $P=Q$ satisfies  $d_{\mathrm{TV}}(P,Q)=0\leq\varepsilon_0$. 
For the purpose of hardness transfer, the relevant certification accuracy is set by the generator-level hardness tolerance. 
If \(Q_n\) is hard to sample within error \(\varepsilon(n)\), then a sound hardness-transfer argument must certify closeness at some threshold \(\delta(n)<\varepsilon(n)\). Consequently, the number of samples required to justify the transfer is lower-bounded by the corresponding certification complexity of \(Q_n\). Throughout, certification complexity refers to success probability at least \(2/3\), unless stated otherwise.

\begin{corollary}[Certification requirement for hardness transfer]
\label{cor:certification-requirement}
Let \(Q_n\) be classically hard to sample within total-variation error \(\varepsilon(n)\). Any argument that transfers this hardness to an unknown distribution \(P_n\) solely by establishing
\begin{equation}
d_{\mathrm{TV}}(P_n,Q_n)\leq\delta(n),
\qquad
\delta(n)<\varepsilon(n),
\end{equation}
must solve the corresponding distribution-certification problem. Therefore, its sample requirement is at least
\begin{equation}
m_{\mathrm{cert}}
\left(
Q_n;
\delta(n),
\varepsilon_1(n)
\right)
\end{equation}
for an alternative threshold \(\varepsilon_1(n)>\delta(n)\) chosen to make the certificate statistically sound.
\end{corollary}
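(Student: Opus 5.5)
The argument is a direct reduction: an argument that soundly certifies the transfer condition, using only samples from $P_n$, already is a test in the sense of Definition~\ref{def:cert-complexity}. The proof therefore just makes precise what ``establishing'' $d_{\mathrm{TV}}(P_n,Q_n)\leq\delta(n)$ from data means.

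\emph{Step 1: formalize the transfer argument.} Such an argument takes $m$ i.i.d.\ samples from $P_n$, together with the known description of $Q_n$, and either outputs ``accept'' (asserting $d_{\mathrm{TV}}(P_n,Q_n)\leq\delta(n)$, so that Lemma~\ref{lem:hardness-transfer} applies) or ``reject''. Two requirements make it meaningful, each holding with probability at least $2/3$ uniformly over all sources $P_n$.
\begin{itemize}
\item \emph{Completeness:} it accepts whenever $d_{\mathrm{TV}}(P_n,Q_n)\leq\delta(n)$. Otherwise, no source would ever be certified, even $P_n=Q_n$.
\item \emph{Soundness:} it rejects whenever $d_{\mathrm{TV}}(P_n,Q_n)\geq\varepsilon_1(n)$, for some threshold $\varepsilon_1(n)>\delta(n)$.
\end{itemize}
The gap between $\delta(n)$ and $\varepsilon_1(n)$ is the unavoidable slack of any finite-sample procedure. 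Without it, no finite test can separate distributions arbitrarily close to the boundary, so soundness could only hold trivially. This is exactly why the corollary allows $\varepsilon_1(n)$ to be chosen.

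\emph{Step 2: the procedure is a certification test.} With these requirements, the accept/reject rule distinguishes $d_{\mathrm{TV}}(P_n,Q_n)\leq\delta(n)$ from $d_{\mathrm{TV}}(P_n,Q_n)\geq\varepsilon_1(n)$ with success probability at least $2/3$. This is precisely a test for the problem in Definition~\ref{def:cert-complexity}, with $\varepsilon_0=\delta(n)$ and $\varepsilon_1=\varepsilon_1(n)$. The condition $0\leq\delta(n)<\varepsilon_1(n)\leq 1$ holds, so the definition applies.

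\emph{Step 3: conclude.} By definition, $m_{\mathrm{cert}}(Q_n;\delta(n),\varepsilon_1(n))$ is the minimum sample size over all such tests. Hence $m\geq m_{\mathrm{cert}}(Q_n;\delta(n),\varepsilon_1(n))$. Intermediate objects such as a trained model $Q_{\widehat\theta,n}$ or an empirical loss do not matter. They are functions of the same samples, so the whole pipeline is still one test. If it draws additional samples, for instance a held-out set, these are simply counted in $m$.

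\emph{Main obstacle.} The only real difficulty is Step 1: justifying that an argument that ``establishes'' closeness must satisfy both completeness and soundness. Soundness is the essential one, since a procedure that accepts distant sources certifies nothing. I would argue that any argument lacking soundness at some $\varepsilon_1(n)>\delta(n)$ cannot license applying Lemma~\ref{lem:hardness-transfer}. Completeness I would take as the minimal non-triviality condition. Once these are fixed, the remaining steps are definitional.
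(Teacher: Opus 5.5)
Your proposal is correct and uses the same reduction the paper gives in the paragraph preceding the corollary: a sound transfer argument that certifies $d_{\mathrm{TV}}(P_n,Q_n)\leq\delta(n)$ with success probability at least $2/3$ is, by Definition~\ref{def:cert-complexity}, a $(\delta(n),\varepsilon_1(n))$ certification test, so it needs at least $m_{\mathrm{cert}}(Q_n;\delta(n),\varepsilon_1(n))$ samples; you make explicit the completeness and soundness conditions that the paper leaves implicit. One small caveat is that your justification for completeness (``otherwise not even $P_n=Q_n$ would be certified'') only forces acceptance at $P_n=Q_n$, not on the whole $\delta(n)$-ball, so full tolerant completeness is really part of what ``establishing'' closeness is taken to mean---though the paper's downstream bounds only use the identity-testing case $\varepsilon_0=0$, which needs acceptance only at $P_n=Q_n$.
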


Corollary~\ref{cor:certification-requirement} does not yet determine the magnitude of the sample complexity. It reduces the source-hardness question to a standard distribution-certification problem. In the next section, we use known lower bounds for this problem to quantify the number of samples required for the high-entropy and almost-uniform distributions relevant to quantum sampling-hardness arguments.

\section{Certification lower bounds}
\label{sec:certification-bounds}

We now quantify the sample requirement identified in Sec.~\ref{sec:hardness-transfer}. Recall that a transfer of sampling hardness from a known hard distribution \(Q\) to an unknown source distribution \(P\) requires a global closeness statement between the two. We therefore consider the most favorable sample-only certification setting as in \cite{hangleiter2019sample}: the target distribution \(Q\) is known completely, the certifier has unlimited computational power, and only the number of samples drawn from \(P\) is counted.

Even in this favorable setting, the required number of samples can be exponential in the problem size. The relevant lower bound follows from the optimal identity-testing result of Valiant and Valiant \cite{valiant2017automatic}, as applied to quantum-sampling certification by Hangleiter \emph{et al.}\cite{hangleiter2019sample}.

\subsection{Known-target identity testing}

Let \(Q=(q_x)_{x\in\mathcal X}\) be a known distribution on a finite sample space \(\mathcal X\). For \(\gamma>0\), let \(Q_{-\gamma}^{-\max}\) denote the subnormalized vector obtained by removing the largest probability of \(Q\) and removing a collection of its smallest probabilities whose total weight is at most \(\gamma\). For a nonnegative vector \(v\), define
\begin{equation}
\|v\|_{2/3}
=
\left(
\sum_x v_x^{2/3}
\right)^{3/2}.
\end{equation}

The following statement is a reformulation of the identity-testing lower bound used in Theorem 2 in Ref.~\cite{hangleiter2019sample}.

\begin{theorem}[Known-target certification lower bound]
\label{thm:known-target-certification}
Let \(Q\) be a known target distribution, and suppose that a test receives \(m\) independent samples from an unknown distribution \(P\). Any test that distinguishes
\(
P=Q
\)
from
\(
\|P-Q\|_1>\rho
\)
with constant completeness and soundness error requires
\begin{equation}
m
\geq
c\,
\max
\left\{
\frac{1}{\rho},
\frac{1}{\rho^2}
\left\|
Q_{-2\rho}^{-\max}
\right\|_{2/3}
\right\},
\end{equation}
where \(c>0\) is a universal constant.
\end{theorem}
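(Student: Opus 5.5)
The plan is not to re-derive the bound from scratch. Instead, I would read Theorem~\ref{thm:known-target-certification} as the lower-bound half of the instance-optimal identity-testing theorem of Valiant and Valiant~\cite{valiant2017automatic}, in the form in which Hangleiter \emph{et al.}~\cite{hangleiter2019sample} state it as their Theorem~2. The proof then amounts to checking that the hypotheses and normalizations line up. Three translations are needed:
\begin{itemize}
\item the $\ell_1$ separation $\|P-Q\|_1>\rho$ here corresponds to the parameter used there (total variation is half of $\ell_1$, so any factor of $2$ goes into $c$);
\item ``constant completeness and soundness error'' corresponds to success probability $2/3$, and can be amplified at constant cost;
\item the truncated vector $Q_{-2\rho}^{-\max}$ is exactly the object appearing in their bound, with the tail weight $2\rho$ measured in the same units.
\end{itemize}
Once these are matched, the claim follows with a universal constant $c$.

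For a self-contained justification I would prove the two terms separately, each by Le Cam's two-point (or point-versus-mixture) method.

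\emph{The $1/\rho$ term.} Let $P'$ move mass of order $\rho$ from the largest atom onto a point, or set of points, that $Q$ gives tiny weight. A test with $m\ll 1/\rho$ samples sees that region with probability $O(m\rho)$, so on the complementary event the two sample distributions coincide. The total variation between $Q^{\otimes m}$ and $P'^{\otimes m}$ is therefore $O(m\rho)$, which forces $m=\Omega(1/\rho)$.

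\emph{The $\|\cdot\|_{2/3}$ term.} This is the heart of the argument. I would proceed in four steps.
\begin{enumerate}
\item Poissonize, so that the counts $N_x$ are independent Poisson variables.
\item On the support $S$ of $Q_{-2\rho}^{-\max}$, define a random alternative $P_\sigma(x)=q_x+\sigma_x\eta_x$, where the $\sigma_x$ are i.i.d.\ Rademacher signs and $\eta_x\propto \rho\, q_x^{2/3}/\sum_{y\in S} q_y^{2/3}$. This choice makes $\sum_x \eta_x\asymp\rho$, so that $\|P_\sigma-Q\|_1>\rho$ with high probability.
\item Restore normalization by adjusting the removed largest atom. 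This is exactly why the maximum is excised.
\item Bound the $\chi^2$-divergence between $Q^{\otimes\mathrm{Poi}(m)}$ and the mixture $\mathbb E_\sigma P_\sigma^{\otimes\mathrm{Poi}(m)}$. By the standard Ingster computation this factorizes over $x$ and gives
\begin{equation*}
1+\chi^2 \le \prod_{x\in S}\cosh\!\left(\frac{m\,\eta_x^2}{q_x}\right).
\end{equation*}
Hence $\chi^2=O\!\left(m^2\sum_x \eta_x^4/q_x^2\right)$. Substituting $\eta_x$, the sum $\sum_x \eta_x^4/q_x^2$ scales as $\rho^4/\|Q_{-2\rho}^{-\max}\|_{2/3}^2$, so the divergence stays small unless $m\gtrsim \|Q_{-2\rho}^{-\max}\|_{2/3}/\rho^2$.
\end{enumerate}

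The main obstacle is keeping the perturbation valid. We need $|\eta_x|\le q_x$ so that $P_\sigma\geq 0$, and the small-probability atoms are where this fails. This is precisely why the lightest atoms, of total weight up to $2\rho$, are discarded. I would handle them by leaving $P_\sigma=Q$ there and verifying that the remaining perturbation still has $\ell_1$ mass $>\rho$. Adjusting the maximal atom must not itself create a large $\chi^2$ contribution. I would first try to show this contribution is $O(m\rho^2/q_{\max})$ and hence dominated. If it is not, I would fall back on citing~\cite{valiant2017automatic} directly for this step.
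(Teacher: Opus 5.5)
Your primary route is exactly what the paper does. You read the statement as the lower-bound half of Valiant and Valiant's instance-optimal identity-testing theorem, in the form of Theorem~2 of Hangleiter \emph{et al.}, and match conventions; the appendix likewise declines to reproduce a proof and simply cites that result. One correction to your matching step: a constant in the truncation subscript cannot be absorbed into $c$. What makes the $2\rho$ harmless is that $\gamma\mapsto\|Q_{-\gamma}^{-\max}\|_{2/3}$ is nonincreasing, since removing more of the lightest mass only shrinks the sum. So a lower bound truncated at any level $\gamma\le 2\rho$ implies the stated one; as I recall, Valiant and Valiant truncate at the $\ell_1$ separation itself.

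Your self-contained sketch, however, would not go through as written. First, discarding the lightest $2\rho$ of mass does not guarantee $\eta_x\le q_x$ on the surviving support $S$. Take two atoms of mass about $1/2$ and a flat tail of total mass $(2+\kappa)\rho$ on atoms of mass $s$, with $\kappa$ small. After truncation, $S$ holds one heavy atom and tail mass $\kappa\rho$. Then $\sum_{y\in S}q_y^{2/3}\approx\kappa\rho\,s^{-1/3}$, so $\eta_x/q_x\to C/\kappa$ on the surviving tail atoms as $s\to 0$, while the heavy atom receives only $O(s^{1/3})$. Capping or zeroing the violators leaves at most $\kappa\rho+O(s^{1/3})\ll\rho$ of $\ell_1$ perturbation, so your ``verify it still exceeds $\rho$'' step fails.

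The repair is to perturb \emph{all} non-maximal atoms, discarded ones included, by $\eta_x=\min\{q_x,\lambda q_x^{2/3}\}$ with $\sum_x\eta_x=2\rho$. Atoms with $q_x<\lambda^3$ then carry total mass at most $2\rho$, so they lie in the discarded set. This gives $\lambda\|Q_{-2\rho}^{-\max}\|_{2/3}^{2/3}\le 2\rho$. Meanwhile $\sum_x\eta_x^4/q_x^2\le\lambda^3\sum_x\eta_x=2\lambda^3\rho$, so your cosh bound yields $m\gtrsim\lambda^{-3/2}\rho^{-1/2}\gtrsim\|Q_{-2\rho}^{-\max}\|_{2/3}/\rho^2$.

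Second, compensating on the maximal atom fails precisely in the flat regime the paper cares about. For $Q$ uniform on $N=2^n$ points, $\sum_x\sigma_x\eta_x$ has typical size $\asymp\rho/\sqrt N\gg q_{\max}=1/N$. Hence $P_\sigma(x_{\max})<0$ about half the time, and your hoped-for $O(m\rho^2/q_{\max})$ is huge rather than dominated. Instead, pair atoms of adjacent mass with opposite signs, or work with approximately normalized Poissonized measures.

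A minor point on the $1/\rho$ term: use the mixture $(1-\rho')Q+\rho'\delta_y$. The largest atom may carry less than $\rho$, and only the mixture makes your ``samples coincide off a rare event'' coupling exact.
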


Theorem~\ref{thm:known-target-certification} concerns identity testing: the accepting case is exactly \(P=Q\). This is weaker than tolerant certification, in which the test must also accept all distributions in a nonzero neighborhood of \(Q\). Consequently, the theorem already gives a lower bound for any more demanding procedure intended to certify global closeness between an unknown source and a known target.

Since
\begin{equation}
\|P-Q\|_1
=
2d_{\mathrm{TV}}(P,Q),
\end{equation}
with $m_{\mathrm{cert}}(Q,\tau):=m_{\mathrm{cert}}(Q;0,\tau)$.
Applying Theorem~\ref{thm:known-target-certification} with the \(\ell_1\)-separation \(\rho=2\tau\) gives
\begin{equation}
\label{eq:m_cert_lb}
m_{\mathrm{cert}}(Q,\tau)
\in
\Omega
\left[
\max
\left\{
\frac{1}{\tau},
\frac{1}{\tau^2}
\left\|
Q_{-4\tau}^{-\max}
\right\|_{2/3}
\right\}
\right],
\end{equation}
The central quantity is thus not simply the cardinality of the sample space. It is the truncated \(\ell_{2/3}\) quasi-norm of the target distribution, which measures how broadly its non-negligible probability mass is spread.

\subsection{A min-entropy consequence}

For a distribution \(Q\), let
\begin{equation}
H_\infty(Q)
=
-\log_2 \max_x Q(x)
\end{equation}
denote its min-entropy. Writing
\begin{equation}
h=H_\infty(Q),
\qquad
q_{\max}=2^{-h},
\end{equation}
the bounds of Ref.~\cite{hangleiter2019sample} (more in Appendix~\ref{app:certification}) imply
\begin{equation}
\left\|
Q_{-\gamma}^{-\max}
\right\|_{2/3}
\geq
2^{h/2}
\left(
1-\gamma-2^{-h}
\right)^{3/2},
\end{equation}
whenever the expression in parentheses is positive.

Combining this inequality with Theorem~\ref{thm:known-target-certification} gives the following consequence.

\begin{corollary}[Certification cost from min-entropy]
\label{cor:min-entropy-certification}
Let \(Q\) be a known target distribution with
\begin{equation}
H_\infty(Q)=h.
\end{equation}
For a fixed total-variation threshold
\(0<\tau<1/4\), and $\gamma = 4 \tau$, any sample-only identity test for \(Q\) requires
\begin{equation}
m_{\mathrm{cert}}(Q,\tau)
\in
\Omega
\left[
\frac{2^{h/2}}{\tau^2}
\left(
1-4\tau-2^{-h}
\right)^{3/2}
\right].
\end{equation}
In particular, for constant \(\tau\) and \(h\to\infty\),
\begin{equation}
m_{\mathrm{cert}}(Q,\tau)
\in
\Omega\left(2^{h/2}\right).
\end{equation}
\end{corollary}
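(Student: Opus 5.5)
We derive the corollary by combining the identity-testing lower bound of Theorem~\ref{thm:known-target-certification}, in its total-variation form~\eqref{eq:m_cert_lb}, with the min-entropy inequality for the truncated quasi-norm stated just before the corollary. First we translate thresholds. Certifying $d_{\mathrm{TV}}(P,Q)=0$ against $d_{\mathrm{TV}}(P,Q)\geq\tau$ is the $\ell_1$ problem with $\rho=2\tau$, up to the harmless strict versus non-strict boundary; to remove it, apply the theorem at $\rho=2\tau'$ with $\tau'\uparrow\tau$, or absorb it into the constant. The truncation parameter is then $2\rho=4\tau=\gamma$, which gives
\begin{equation*}
m_{\mathrm{cert}}(Q,\tau)\geq \frac{c'}{\tau^2}\,\bigl\|Q^{-\max}_{-4\tau}\bigr\|_{2/3}
\end{equation*}
for a universal constant $c'>0$. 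We simply discard the other term $1/\tau$ in the maximum.

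Second, we insert the quasi-norm bound $\|Q^{-\max}_{-\gamma}\|_{2/3}\geq 2^{h/2}(1-\gamma-2^{-h})^{3/2}$ with $\gamma=4\tau$. This is valid whenever $1-4\tau-2^{-h}>0$. Since $\tau<1/4$, the quantity $1-4\tau$ is positive, so the condition holds for all $h>\log_2\bigl(1/(1-4\tau)\bigr)$. When the bracket is nonpositive the claimed bound is vacuous and there is nothing to prove. This yields the first displayed bound.

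If we wanted a self-contained justification of the quasi-norm inequality, we would argue as follows. The vector $v=Q^{-\max}_{-\gamma}$ has total mass at least $1-\gamma-q_{\max}=1-\gamma-2^{-h}$. Each of its entries is at most $q_{\max}$. For such vectors, concavity gives
\begin{equation*}
\sum_x v_x^{2/3}\geq \sum_x v_x\,q_{\max}^{-1/3}.
\end{equation*}
Raising to the $3/2$ power produces $\|v\|_{2/3}\geq (1-\gamma-2^{-h})^{3/2}q_{\max}^{-1/2}$, which is exactly $2^{h/2}(1-\gamma-2^{-h})^{3/2}$.

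Finally, for constant $\tau\in(0,1/4)$ and $h\to\infty$, the factor $(1-4\tau-2^{-h})^{3/2}$ tends to the positive constant $(1-4\tau)^{3/2}$, and $1/\tau^2$ is also a constant. Both are therefore absorbed into the $\Omega$, leaving $\Omega(2^{h/2})$. The only delicate points are bookkeeping: the factor-of-two conversion between $\ell_1$ and total variation that yields $\gamma=4\tau$, and checking positivity of the bracket. The requirement $\tau<1/4$ is exactly what secures that positivity asymptotically, so no step is a genuine obstacle.
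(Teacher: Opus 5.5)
Your proposal is correct and follows essentially the same route as the paper: set $\rho=2\tau$ so the truncation becomes $\gamma=4\tau$, then bound $\|Q^{-\max}_{-\gamma}\|_{2/3}$ from below using $v_x^{2/3}\ge v_x\,q_{\max}^{-1/3}$ together with remaining mass at least $1-\gamma-2^{-h}$, exactly as in the appendix. One small point: the strict versus non-strict threshold needs no limiting argument, since a test that rejects every $P$ with $\|P-Q\|_1\ge 2\tau$ in particular rejects every $P$ with $\|P-Q\|_1>2\tau$, so the theorem applies directly at $\rho=2\tau$; taking $\tau'\uparrow\tau$ from below would actually not be justified.
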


Thus, if
\begin{equation}
H_\infty(Q_n)\in\Omega(n),
\end{equation}
then known-target certification requires exponentially many samples:
\begin{equation}
m_{\mathrm{cert}}(Q_n,\tau)
\in
2^{\Omega(n)}
\end{equation}
for any fixed sufficiently small \(\tau\).

The min-entropy bound is convenient but need not be tight for every distribution. Two targets with the same min-entropy may have different truncated \(\ell_{2/3}\) quasi-norms and hence different identity-testing complexities. Whenever possible, the sharper quantity in Theorem~\ref{thm:known-target-certification} should therefore be used.

\subsection{Consequence for source-hardness transfer}

We now combine Corollary~\ref{cor:min-entropy-certification} with the hardness-transfer result of Sec.~\ref{sec:hardness-transfer}.

\begin{corollary}[Sample requirement for source-hardness transfer]
\label{cor:source-hardness-transfer}
Let \(\{Q_n\}_{n\geq1}\) be a known distribution family that is
classically hard to sample within total-variation error \(\varepsilon(n)\).
Suppose that a source-hardness claim for an unknown family
\(\{P_n\}_{n\geq1}\) is based solely on certifying
\[
d_{\mathrm{TV}}(P_n,Q_n)\leq \delta(n)<\varepsilon(n).
\]
Let \(\varepsilon_1(n)>\delta(n)\) be the rejection threshold of the
corresponding tolerant certification problem. If
\[
H_\infty(Q_n)=h_n,
\]
then, for constant certification error probability, this route requires
\[
m
\in
\Omega\!\left[
\frac{2^{h_n/2}}{\varepsilon_1(n)^2}
\left(
1-4\varepsilon_1(n)-2^{-h_n}
\right)^{3/2}
\right],
\]
whenever the factor in parentheses is positive.
In particular, if \(h_n\in\Omega(n)\)
and $0<\varepsilon_1(n)\le c<1/4$ for some constant $c$, then the required
number of samples is exponential in \(n\).
\end{corollary}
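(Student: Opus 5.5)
The plan is to chain three results already established: Corollary~\ref{cor:certification-requirement}, the tolerant-to-identity observation after Definition~\ref{def:cert-complexity}, and Corollary~\ref{cor:min-entropy-certification}. No new estimates are needed; the work is in matching thresholds and checking that each reduction points the right way.

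First, by Corollary~\ref{cor:certification-requirement}, any source-hardness argument that proceeds solely by certifying $d_{\mathrm{TV}}(P_n,Q_n)\leq\delta(n)$ must implement a test for the $(\delta(n),\varepsilon_1(n))$-tolerant certification problem. Such a test uses at least $m_{\mathrm{cert}}(Q_n;\delta(n),\varepsilon_1(n))$ samples and succeeds with constant error probability. We will take this error to be at most $1/3$, amplifying by a constant factor if necessary.

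Second, we reduce to identity testing. If $P_n=Q_n$, then $d_{\mathrm{TV}}=0\leq\delta(n)$, so the test accepts. If $d_{\mathrm{TV}}(P_n,Q_n)\geq\varepsilon_1(n)$, the test rejects. Hence the same procedure is an identity tester at total-variation separation $\varepsilon_1(n)$, and therefore
\[
m_{\mathrm{cert}}(Q_n;\delta(n),\varepsilon_1(n))\geq m_{\mathrm{cert}}(Q_n,\varepsilon_1(n)).
\]
There is one subtlety: Theorem~\ref{thm:known-target-certification} is phrased with the strict condition $\|P-Q\|_1>\rho$, whereas tolerant rejection uses $\geq\varepsilon_1$. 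Any distribution with $\|P-Q\|_1>2\varepsilon_1$ also satisfies $d_{\mathrm{TV}}\geq\varepsilon_1$, so the identity tester covers the theorem's alternative and the lower bound applies. This strict-versus-nonstrict matching is the only place where care is needed.

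Third, we apply Corollary~\ref{cor:min-entropy-certification} with $\tau=\varepsilon_1(n)$, $\gamma=4\varepsilon_1(n)$ and $h=h_n$. Its proof requires $0<\tau<1/4$, but it uses that condition only to keep $1-4\tau-2^{-h}$ positive. Since the statement here assumes positivity directly, the displayed $\Omega$ bound follows. The constants are universal, coming from Theorem~\ref{thm:known-target-certification}, so the bound holds uniformly in $n$ even when $\varepsilon_1$ depends on $n$.

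For the asymptotic claim, suppose $\varepsilon_1(n)\leq c<1/4$ and $h_n\geq\alpha n$ with $\alpha>0$. Then
\[
1-4\varepsilon_1(n)-2^{-h_n}\geq 1-4c-2^{-\alpha n},
\]
which exceeds $(1-4c)/2$ for all large $n$. The factor $(\cdot)^{3/2}$ is therefore bounded below by a positive constant. Also $1/\varepsilon_1^2\geq 1/c^2>0$, and since $\varepsilon_1>0$ this prefactor never shrinks the bound. The required sample count is therefore $\Omega(2^{\alpha n/2})=2^{\Omega(n)}$, which is exponential in $n$.
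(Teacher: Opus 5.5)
Your proposal is correct and takes essentially the same route as the paper. The paper obtains this corollary by chaining Corollary~\ref{cor:certification-requirement}, the observation that $(\delta(n),\varepsilon_1(n))$-tolerant certification solves identity testing at separation $\varepsilon_1(n)$, and Corollary~\ref{cor:min-entropy-certification} with $\tau=\varepsilon_1(n)$; your care with the strict versus non-strict alternative, with positivity forcing $\varepsilon_1(n)<1/4$, and with the uniform constant for $n$-dependent $\varepsilon_1(n)$ makes explicit what the paper leaves implicit.
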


Corollary~\ref{cor:source-hardness-transfer} is the direct implication for quantum generative modeling. Even when the candidate hard target \(Q_n\) is known completely, a polynomial-size dataset cannot generally certify the global closeness required to transfer its sampling hardness to an unknown data source in the high-min-entropy regime. The ordinary dataset-first setting provides no more information than this known-target problem and generally provides less.

\subsection{Representative distribution regimes}
\label{sec:certification-examples}

We next illustrate the certification bound in several representative regimes. These examples separate three notions that should not be conflated: entropy, classical sampling complexity, and learnability under structural assumptions.

\subsubsection{Uniform distribution}

Let \(U_n\) denote the uniform distribution over
\(\{0,1\}^n\):
\begin{equation}
U_n(x)=2^{-n}.
\end{equation}
Its min-entropy is
\begin{equation}
H_\infty(U_n)=n.
\end{equation}
Therefore, for constant \(\tau\),
\begin{equation}
m_{\mathrm{cert}}(U_n,\tau)
\in
\Omega\left(2^{n/2}\right).
\end{equation}

This example demonstrates that large certification complexity does not imply classical sampling hardness. The uniform distribution is classically trivial to sample, yet certifying it in total variation against arbitrary alternative distributions requires exponentially many samples. The obstruction arises from flatness on an exponentially large domain, not from computational hardness.

For the uniform distribution, the general identity-testing bound is also tight up to constant and accuracy-dependent factors \cite{Paninski}:
\begin{equation}
m_{\mathrm{cert}}(U_n,\tau)
\in
\Theta\left(\frac{2^{n/2}}{\tau^2}\right)
\end{equation}
in the usual constant-confidence setting.

\subsubsection{Porter--Thomas-like distributions}

Consider a target distribution on \(N=2^n\) outcomes whose largest probability satisfies
\begin{equation}
q_{\max}
\in
O\left(\frac{\operatorname{poly}(n)}{2^n}\right).
\end{equation}
Equivalently,
\begin{equation}
H_\infty(Q_n)
\geq
n-O(\log n).
\end{equation}
This includes the typical scaling expected for Porter--Thomas-like
output distributions. Suppose, more explicitly, that
\begin{equation}
q_{\max}
=
\max_x Q_n(x)
\leq
\frac{Cn}{2^n}
\end{equation}
for some constant \(C>0\). Then
\begin{equation}
H_\infty(Q_n)
\geq
n-\log_2 n-\log_2 C.
\end{equation}
Applying Corollary~\ref{cor:min-entropy-certification} yields the explicit lower bound with a constant \(c_{\mathrm{cert}}>0\) that
\begin{equation}
m_{\mathrm{cert}}(Q_n,\tau)
\geq
\frac{c_{\mathrm{cert}}}{\tau^2}
\frac{2^{n/2}}{\sqrt{Cn}}
\left(
1-4\tau-\frac{Cn}{2^n}
\right)^{3/2}.
\end{equation}
Consequently, for fixed \(0<\tau<1/4\),
\begin{equation}
m_{\mathrm{cert}}(Q_n,\tau)
\in
\Omega\!\left(
\frac{2^{n/2}}{\sqrt n}
\right).
\end{equation}

Thus, the almost-uniform output distributions commonly associated with random quantum circuits remain exponentially expensive to certify from classical samples.

\subsubsection{Sparse distributions}

Suppose \(Q_n\) is uniform on a support of size \(s_n\):
\begin{equation}
Q_n(x)
=
\begin{cases}
s_n^{-1}, & x\in S_n,\\
0, & x\notin S_n.
\end{cases}
\end{equation}
Then
\begin{equation}
H_\infty(Q_n)=\log_2 s_n.
\end{equation}
Applying Corollary~\ref{cor:min-entropy-certification}
gives the lower bound
\begin{equation}
m_{\mathrm{cert}}(Q_n,\tau)
\in
\Omega
\left(
\frac{\sqrt{s_n}}{\tau^2}
\right)
\end{equation}
For distributions that are uniform over their support, this lower bound is known to be tight \cite{Paninski,valiant2017automatic}, so that
\begin{equation}
m_{\mathrm{cert}}(Q_n,\tau)
\in
\Theta
\left(
\frac{\sqrt{s_n}}{\tau^2}
\right)
\end{equation}
for constant confidence. If
\begin{equation}
s_n=2^{\alpha n}
\end{equation}
for some constant \(0<\alpha\leq1\), then
\begin{equation}
m_{\mathrm{cert}}(Q_n,\tau)
\in
\Theta\!\left(
\frac{2^{\alpha n/2}}{\tau^2}
\right).
\end{equation}

This example exhibits a continuum between polynomially supported sources and distributions spread across the full exponentially large domain. The relevant distinction is not simply ``sparse'' versus ``dense'', but the effective number of outcomes carrying non-negligible probability mass.

\subsubsection{Product distributions and the role of structural promises}

Finally, consider an \(n\)-bit product distribution
\begin{equation}
Q_n(x)
=
\prod_{j=1}^n
p_j^{x_j}(1-p_j)^{1-x_j}.
\end{equation}
When all parameters are bounded away from zero and one, the min-entropy is extensive:
\begin{equation}
H_\infty(Q_n)
=
-\sum_{j=1}^n
\log_2\max\{p_j,1-p_j\}
\in
\Theta(n).
\end{equation}
The unrestricted known-target identity-testing problem, in which the unknown alternative \(P\) may be any distribution over \(\{0,1\}^n\), therefore has exponential sample complexity according to Corollary~\ref{cor:min-entropy-certification}.

This does not mean that product distributions are exponentially hard to learn. If the learner is promised that the unknown source is itself a product distribution, the source is determined by the \(n\) one-bit marginals. These parameters can be estimated from polynomially many samples, and the resulting structured learning problem can be sample-efficient.

This example is important for the interpretation of our result. High min-entropy alone does not imply computational hardness, poor learnability, or the absence of useful inductive structure. Rather, it implies a large sample requirement for distribution-free global certification against arbitrary alternatives. Structural promises can reduce the statistical problem, but those promises constitute additional information about the source and cannot be inferred from the unstructured dataset alone.

\begin{table*}[t]
\centering
\caption{
The role of source information in quantum generative modeling. Classical sampling complexity is a property of a specified distribution or generative mechanism, whereas sample-only certification is a statistical task involving an unknown source. When the source is known, for example through a circuit description or trusted preparation procedure, its sampling complexity can be analyzed directly and no sample-only identity certificate is required. When the source is unknown and only i.i.d.\ samples are available, certifying that it equals or is close to a specified target may require exponentially many samples, even when the target distribution is classically trivial to sample.
}
\label{tab:source_information}
\begin{tabular}{lccc}
\hline
\textbf{Target distribution} & \textbf{Source information} & \textbf{Classical sampling} & \textbf{Certification from samples alone}
\\
\hline
Uniform & Known & Easy & Not applicable
\\
Uniform & Unknown; samples only & Easy & $\Theta(2^{n/2})$
\\
Product & Known & Easy &  Not applicable
\\
Product & Unknown; samples only & Easy & Exponential in the unrestricted setting
\\
Random-circuit sampling & Known & Hard under standard conjectures & Not applicable
\\
Random-circuit sampling & Unknown; samples only & Not established from the samples & Exponential for sufficiently flat targets
\\
\hline
\end{tabular}
\end{table*}

\section{Implications for quantum generative modeling}
\label{sec:implications}

The preceding sections distinguish two conceptually different settings. In the first, the data-generating mechanism is specified explicitly, for example by a known quantum circuit family or a trusted physical preparation procedure. In this case, classical sampling hardness is a well-defined complexity-theoretic property of the source itself. In the second, which is the setting encountered in most practical generative modeling tasks, the source distribution is unknown and only a finite dataset is available. Our results concern this latter setting.

The key observation is that generator-level sampling hardness and source-distribution sampling hardness are connected only through a global relation between the two distributions. In particular, Lemma~\ref{lem:hardness-transfer} shows that transferring hardness from a quantum generator to an unknown source requires a certificate that the two distributions are sufficiently close in total variation distance. Section~\ref{sec:certification-bounds} then shows that establishing such a certificate from samples alone may require exponentially many observations for the high-entropy distributions relevant to many sampling-hardness proposals.

Consequently, for an ordinary dataset, successful training of a quantum generative model should not by itself be interpreted as evidence that the underlying data-generating distribution is classically hard to sample from. Finite-sample agreement establishes only that the trained model reproduces the observed data according to the chosen training objective. Without additional information about the source, such agreement does not justify transferring complexity-theoretic hardness from the model family to the unknown data-generating process.

A second, independent issue concerns the relation between generator-level hardness and the trained quantum model itself. The classical sampling hardness results discussed in Sec.~\ref{sec:glsh} are typically established for random circuit instances drawn from an ensemble or for a non-negligible fraction of instances under an associated average-case hardness conjecture. By contrast, the parameters of a trained QGM are not sampled from this ensemble. They are selected adaptively by an optimization procedure driven by the training data. Thus, the generator-level hardness of the underlying model family does not, by itself, imply that the particular trained instance remains in the hard-to-simulate regime. This issue is complementary to recent work examining whether the distributional properties associated with sampling hardness are compatible with trainability~\cite{herbst2025limits}. Here we assume, for the sake of argument, that a trained model does retain generator-level hardness, and ask the subsequent question of whether this hardness can be attributed to the unknown data-generating distribution.

The examples in Table~\ref{tab:source_information} illustrate that sample-only certification complexity is fundamentally different from computational sampling complexity. Even distributions that are classically trivial to sample, such as the uniform distribution or product distributions, may require exponentially many samples to certify in the absence of structural assumptions. Conversely, specifying the source family or generation mechanism fundamentally changes the statistical problem. \textit{The distinction is therefore not between quantum and classical models, but between specified and unspecified sources.}

This distinction clarifies how sampling-hardness arguments should be interpreted across the existing QGM literature. In promised-source settings, where the data are generated by a specified quantum process or a trusted family of quantum circuits, the sampling hardness of the source is part of the problem formulation rather than something inferred from the dataset. Examples include recent proposals for generative  quantum advantage~\cite{huang2025generative}, where the source distribution is explicitly assumed to originate from a classically hard-to-simulate quantum sampler. Our results do not challenge such settings, they concern the ordinary dataset-first scenario, in which only classical samples are provided and the data-generating  mechanism is unknown.  

More generally, works motivating Born machines and IQP-based generative models through the existence of classically hard distributions within their model families~\cite{coyle2020born, ballo2026shallow, placidi2026impact} should be interpreted with care when applied to ordinary datasets. The presence of hard-to-sample distributions within a generator family is an important complexity-theoretic property of the model, but it does not by itself establish that a trained instance  retains this hardness, nor that the unknown data-generating distribution inherits it. The latter requires an additional  distributional certificate, whose sample complexity is the subject of the present work.

These observations suggest that source-level sampling hardness is, in general, not an operationally accessible quantity for ordinary datasets of unknown origin. Consequently, practical claims of quantum generative utility should instead be supported by quantities that are operationally observable or empirically verifiable, such as predictive performance on held-out data, faithful reproduction of task-relevant observables, robustness, or resource-adjusted comparison against strong classical baselines under comparable assumptions.

\section{Conclusion}
\label{sec:conclusion}

We have distinguished the classical sampling hardness of a specified quantum generator from the hardness of an unknown data-generating distribution in quantum generative modeling. While the former is a well-defined complexity-theoretic property of a specified generative mechanism, the latter is a property of a latent source observed only through finite samples. Transferring sampling hardness from a quantum model to an unknown source therefore requires certifying a global relation between the two distributions.

By connecting this observation to existing certification lower bounds, we showed that such a certificate may require exponentially many samples for the high-entropy and almost-uniform distributions relevant to many sampling-hardness proposals. This limitation is not unique to classically hard distributions, even distributions that are trivial to sample, such as the uniform distribution or product distributions, may require exponentially many samples to certify in the absence of additional structural assumptions.

Our results therefore suggest that, for ordinary datasets of unknown origin, sampling hardness should not be regarded as an operational basis for claims of quantum generative advantage. Instead, meaningful advantage claims require additional information about the source, or should be established through task-dependent performance and comparison with strong classical baselines.

\begin{acknowledgments}
We are grateful to Matthew DeCross, Robin Lorenz, Natansh Mathur, and Harry Buhrman for helpful discussions.
\end{acknowledgments}

\bibliography{refs}

\clearpage
\onecolumngrid
\appendix

\section{Certification lower bounds and the min-entropy consequence}
\label{app:certification}

For completeness, we summarize the identity-testing result underlying
Sec.~\ref{sec:certification-bounds} and derive the min-entropy
corollary used throughout this work.

\subsection{Known-target identity testing}

The following theorem is a restatement of the
known-target identity-testing lower bound established by
Valiant and Valiant~\cite{valiant2017automatic}, as
reformulated by Hangleiter \emph{et al.}~\cite{hangleiter2019sample}
for the certification of quantum sampling devices.

\begin{theorem}[Hangleiter {\it et al.}]
\label{thm:hangleiter_appendix}
Let \(Q\) be a known probability distribution over a finite
sample space \(\mathcal X\), and let
\(Q_{-\gamma}^{-\max}\) denote the distribution obtained by
removing the largest probability together with the smallest
probabilities whose total mass is at most \(\gamma\).
Then any algorithm that distinguishes $P=Q$
from $\|P-Q\|_1>\rho$
with constant completeness and soundness error requires
\begin{equation}
m
\ge
c
\max
\left\{
\frac1\rho,
\frac1{\rho^2}
\left\|
Q_{-2\rho}^{-\max}
\right\|_{2/3}
\right\},
\end{equation}
where
\begin{equation}
\|v\|_{2/3}
=
\left(
\sum_i v_i^{2/3}
\right)^{3/2},  
\end{equation}
and \(c>0\) is a universal constant.
\end{theorem}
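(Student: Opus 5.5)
This statement restates the Valiant--Valiant instance-optimal identity-testing lower bound as repackaged in Ref.~\cite{hangleiter2019sample}. The natural plan is therefore to reduce it to that bound rather than reprove it from scratch. There are two independent terms in the maximum, and I would prove each as a separate lower bound. A maximum of two valid lower bounds is itself a lower bound, up to halving the constant.

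\emph{Step 1: the $1/\rho$ term.} I would construct a single alternative $P$ that differs from $Q$ only on an event of probability about $\rho$. For example, move mass $\rho/2+\epsilon$ from the largest atom of $Q$ to a fresh or small atom, or mix $Q$ with a point mass at weight slightly above $\rho/2$. Then $\|P-Q\|_1>\rho$. A sample of size $m$ from $P$ contains any symbol of the moved mass with probability at most $O(m\rho)$. Conditioned on not seeing such a symbol, the sample is distributed essentially as one from $Q$. Hence for $m\le c/\rho$ the total-variation distance between $P^{\otimes m}$ and $Q^{\otimes m}$ is below $1/3$, and no test achieves constant completeness and soundness. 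This step is elementary.

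\emph{Step 2: the $\|Q_{-2\rho}^{-\max}\|_{2/3}/\rho^2$ term.} Here I would follow the Valiant--Valiant construction, using Poissonization so that symbol counts become independent. Fix the truncated support $S$, obtained by dropping the top atom and the smallest atoms of total mass at most $2\rho$. Define a random perturbation $P_z(x)=q_x(1+z_x\varepsilon_x)$ for $x\in S$, with i.i.d.\ Rademacher signs $z_x$. Choose $\varepsilon_x q_x\propto \rho\, q_x^{2/3}/\sum_{y\in S} q_y^{2/3}$, so that $\sum_x |P_z(x)-q_x|$ exceeds $\rho$. The removed mass, including the top atom, is used to renormalize, which is exactly why those atoms are excised. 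One then bounds the chi-square or Hellinger distance between the Poissonized sample from $Q$ and the mixture over $z$ of samples from $P_z$. The second-moment calculation gives a divergence of order
\begin{equation}
\sum_{x\in S} m^2\varepsilon_x^4 q_x^2 \;\sim\; \frac{m^2\rho^4}{\bigl(\sum_{x\in S} q_x^{2/3}\bigr)^3}\;=\;\frac{m^2\rho^4}{\|Q_{-2\rho}^{-\max}\|_{2/3}^2}.
\end{equation}
This quantity is small when $m\ll \|Q_{-2\rho}^{-\max}\|_{2/3}/\rho^2$. Standard de-Poissonization then transfers the bound to exactly $m$ samples.

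\emph{Main obstacle.} The delicate step is ensuring that the perturbations are valid: one needs $\varepsilon_x\le 1$ and correct normalization. Atoms that are too small would require $\varepsilon_x>1$, and the largest atom could otherwise dominate the normalization. The truncation $Q^{-\max}_{-2\rho}$ is designed precisely to handle both issues. Rather than redoing this bookkeeping, I would cite Ref.~\cite{valiant2017automatic} and Theorem~2 of Ref.~\cite{hangleiter2019sample} for it. The only remaining check is that the paper's conventions ($\ell_1$ separation $\rho$, truncation at $2\rho$, success probability $2/3$) match theirs up to the universal constant $c$.
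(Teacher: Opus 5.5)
Your proposal is correct and takes the same route as the paper, which also does not reprove the bound and instead attributes it directly to the Valiant--Valiant instance-optimal identity-testing lower bound as restated in Theorem~2 of Hangleiter \emph{et al.}, so your reduction-by-citation is the same argument. Your added sketch is consistent with that source: the subadditivity bound $d_{\mathrm{TV}}(P^{\otimes m},Q^{\otimes m})\le m\,d_{\mathrm{TV}}(P,Q)$ gives the $1/\rho$ term, the Poissonized second-moment estimate $m^2\rho^4/\|Q_{-2\rho}^{-\max}\|_{2/3}^2$ gives the main term, and, like the paper, you leave the validity and normalization bookkeeping to the cited references.
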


The proof is not reproduced here, since it follows directly
from the instance-optimal identity-testing framework of
Valiant and Valiant~\cite{valiant2017automatic}; see
Theorem~2 of Ref.~\cite{hangleiter2019sample}.
The remainder of this appendix derives the min-entropy
bound used in Sec.~\ref{sec:certification-bounds}.


\subsection{A lower bound from min-entropy}

Let
$
h
=
H_\infty(Q)
=
-\log_2
\max_x Q(x),
$
and denote
$
q_{\max}
=
2^{-h}.
$
After removing the largest probability together with
probability mass at most \(\gamma\),
the remaining vector
$
Q_{-\gamma}^{-\max}
=
(q_i)_i
$
has total probability mass
\begin{equation}
M
=
\sum_i q_i
\ge
1-\gamma-q_{\max}
=
1-\gamma-2^{-h}.
\end{equation}
Moreover, every remaining component satisfies
\begin{equation}
q_i
\le
q_{\max}
=
2^{-h}.
\end{equation}
Therefore,

\begin{equation}
q_i^{2/3}
=
q_i\,q_i^{-1/3}
\ge
q_i\,2^{h/3},
\end{equation}
where the inequality follows from
\(q_i^{-1/3}\ge2^{h/3}\).
Summing over all remaining entries gives
\begin{equation}
\sum_i
q_i^{2/3}
\ge
2^{h/3}
\sum_i q_i
=
2^{h/3}M.
\end{equation}
Applying the definition of the
\(\ell_{2/3}\) quasi-norm,

\begin{align}
\left\|
Q_{-\gamma}^{-\max}
\right\|_{2/3}
&=
\left(
\sum_i
q_i^{2/3}
\right)^{3/2}
\nonumber\\
&\ge
\left(
2^{h/3}M
\right)^{3/2}
\nonumber\\
&=
2^{h/2}
M^{3/2}
\nonumber\\
&\ge
2^{h/2}
\left(
1-\gamma-2^{-h}
\right)^{3/2}.
\label{eq:minentropy_appendix}
\end{align}
This establishes the lower bound quoted in
Sec.~\ref{sec:certification-bounds}.


\subsection{Derivation of Corollary~\ref{cor:min-entropy-certification}}
Theorem~\ref{thm:hangleiter_appendix} is stated in terms of the \(\ell_1\)-distance threshold \(\rho\), whereas Sec.~\ref{sec:certification-bounds} uses total variation,
\begin{equation}
d_{\mathrm{TV}}(P,Q)
=
\frac12
\|P-Q\|_1.
\end{equation}
Hence,
$\rho
=
2\tau,
$
where \(\tau\) is the total-variation certification threshold. Substituting Eq.~\eqref{eq:minentropy_appendix} into Theorem~\ref{thm:hangleiter_appendix}, and replacing \(\rho=2\tau\), gives
\begin{equation}
m_{\mathrm{cert}}(Q,\tau)
\in
\Omega
\left[
\frac{2^{h/2}}{\tau^2}
\left(
1-4\tau-2^{-h}
\right)^{3/2}
\right],
\end{equation}
where universal constants have been absorbed into the \(\Omega(\cdot)\) notation, which is precisely Corollary~\ref{cor:min-entropy-certification}.

\end{document}